\newtheorem{theorem}{Theorem}[section]
\newtheorem{lemma}[theorem]{Lemma}
\newtheorem{proposition}[theorem]{Proposition}
\numberwithin{equation}{section}
\DeclareMathAlphabet{\pazocal}{OMS}{zplm}{m}{n}
\def\CC{\mathbb{C}}
\def\NN{\mathbb{N}}
\def\RR{\mathbb{R}}
\newcommand{\cD}{{\mathcal{D}}}
\newcommand{\cF}{{\mathcal{F}}}
\newcommand{\cS}{{\mathcal{S}}}
\newcommand{\bp}{{\mbox{\boldmath $p$}}}
\newcommand{\bx}{{\mbox{\boldmath $x$}}}
\newcommand{\by}{{\mbox{\boldmath $y$}}}
\def\bO{\mbox{\boldmath $O$}}
\newcommand{\fA}{{\mathfrak A}}
\newcommand{\fR}{{\mathfrak R}}
\newcommand{\bpartial}{\bm{\partial}}
\def\eg{{\it e.g.\ }}
\def\viz{{\it viz.}}
\def\ie{{\it i.e.\ }}
\def\etc{{\it etc}}
\def\be{\begin{equation}} 
\def\ee{\end{equation}}
\newcommand{\vertiii}[1]{{\left\vert\kern-0.25ex \left\vert\kern-0.25ex
    \left\vert #1 
    \right\vert\kern-0.25ex\right\vert\kern-0.25ex\right\vert}}
\begin{document} 

\title{\hspace*{-5mm}
  \mbox{Many-body physics and resolvent algebras}}
\author{Detlev Buchholz${}^{(a)}$ \ and \ Jakob Yngvason${}^{(b)}$ \\[2mm]
  \small ${}^{(a)}$ Mathematisches Institut, Universit\"at G\"ottingen,
  37073 G\"ottingen, Germany \\
 \small  ${}^{(b)}$  Fakult\"at f\"ur  Physik, Universit\"at Wien, 
 1090 Wien, Austria \\[4mm]}
\date{}

\maketitle

\vspace*{-4mm}
\noindent \textbf{Abstract.}
Some advantages of the algebraic approach to many body physics,
based on resolvent algebras, are illustrated by the 
simple example of non-interacting bosons which are confined in
compact regions with soft boundaries. It is shown that the dynamics of
these systems converges to the spatially homogeneous 
dynamics for increasing regions and particle numbers and a
variety of boundary forces. The corresponding correlation functions
of thermal equilibrium states also converge in this limit.
Adding to the regions further particles, the limits are steady states,
including Bose-Einstein condensates. They  
can either be spatially homogeneous, or they are inhomogeneous
with varying, but finite local particle densities. In case of
this spontaneous breakdown of the spatial symmetry, the
presence of condensates can be established by exhibiting
temporal correlations over large
temporal distances (memory effects).

\bigskip  \noindent
\textbf{Keywords} \
thermodynamic limit of dynamics,  
Bose-Einstein condensation, spontaneous symmetry
breaking of spatial translations

\medskip  \noindent
\textbf{Mathematics Subject Classification} \ 
81V73 $\cdot$ 82D05 $\cdot$ 46L60

\section{Introduction}
\label{sec1}
\setcounter{equation}{0}

The original formulation of quantum mechanics is based
on observables and their temporal evolution (Heisenberg picture). 
Yet this algebraic approach is rarely used in many-body physics.
The reason for this is the fact that the common algebras of the 
canonical observables are not stable under the action of 
dynamics that describe interaction processes;
this applies both to the polynomial algebra of the position and
momentum operators as well as the Weyl algebra formed by 
unitary operators generated by them.
Instead, one makes use of the fact that for systems of any finite 
number of particles these algebras have unique representations on Hilbert
spaces  (Stone-von Neumann theorem). There the generators of the
dynamics can be represented by self-adjoint operators, which are 
accessible to mathematical analysis.

\medskip 
The limitations of this approach
only become apparent when one moves on to the physically desirable
idealization of systems with an infinite number of particles.
This step is needed in order
to precisely describe different phases, Bose-Einstein
condensation, or the phenomenon
of spontaneous symmetry breaking. In this step, one leaves
the scope of the Stone-von Neumann theorem and is confronted with
a multitude of inequivalent representations of the canonical 
observables which differ in temperature, chemical potential, 
spatial structures, \etc. The corresponding representations
must be determined in each individual case. On the other hand, 
the algebraic
properties of the underlying observables and their dynamical
evolution remain unchanged. So these algebraic structures are   
the common core of the multitude of states
that describe very different physical systems for a given
dynamics. 

\medskip 
It is therefore gratifying that an algebraic formalism is now
available that describes both finite and infinite particle systems
and their dynamics. As will be explained, it consists of a single algebra,
the resolvent algebra, on which the dynamics acts by automorphisms
\cite{BuGr}. The observables are  
(gauge) invariant under the adjoint action of the unitary
group generated by the particle number operator. They constitute
a subalgebra of the resolvent algebra which is represented by bounded
operators on Hilbert space for any finite or infinite particle
number \cite{Bu2}. 
These spaces can be reconstructed from expectation functionals
on the algebra by the Gelfand-Naimark-Segal (GNS)
construction \cite[Ch.~III.2.2]{Ha}. 
By considering sequences of functionals one can
change the representation and thereby arrive in the 
limit at functionals describing the desired physical
systems. This algebraic approach has already proved to be useful.
For example, it entered in the construction of ground and
equilibrium states for oscillating lattice systems with
nearest neighbor interactions \cite{Bu1} and it shed new
light on the characterization of Bose-Einstein condensates
\cite{BaBu, Bu3, Bu4}.

\medskip 
It is the aim of the present
article to illustrate the virtues of the algebraic approach by
discussing several problems of physical interest
for non-interacting bosons. In a first step, we consider the dynamics 
of finite numbers of bosons which are confined to compact 
spatial regions with soft (harmonic) boundary forces.  
The action of the corresponding dynamics on the 
observables converges to the spatially homogeneous dynamics 
under mild conditions on the strength of the confining forces
for increasing particle numbers and regions. In other words,
the dynamics of the confined system has a unique thermodynamic
limit which does not depend sensitively on the  
confining forces. We then consider sequences of
thermal equilibrium states (Gibbs ensembles) on the
observable algebra with increasing particle numbers and
establish the convergence of the corresponding expectation
functionals. The computations greatly
simplify by making use of the
Kubo-Martin-Schwinger (KMS) condition for the expectation
values. The limits are the well-known homogeneous
equilibrium states, including Bose-Einstein condensates. 
Although the limit dynamics is spatially homogeneous, this does
not imply, however, that the limits of thermal equilibrium
states have to be homogeneous. We exemplify
this spontaneous breakdown of symmetry using an example in
which the equilibrium states in limited regions are “overfed” with
particles whilst keeping the local particle densities finite.
The condensate then escapes to more distant regions, leaving
an inhomogeneous equilibrium state in the thermodynamic limit. 
There, it no longer makes sense to determine the presence of
condensates by examining spatial correlations. Instead, we look
at the temporal correlations in these stationary states, which
persist over large time intervals (memory effects). 

\medskip
Our article is organized as follows: The subsequent 
section summarizes some basic definitions and 
results for the resolvent algebra.
In Sect. 3 the dynamics for non-interacting systems in finite
regions with harmonic boundary forces is introduced, and it is shown
to converge to the spatially homogeneous free dynamics in the
thermodynamic limit. 
The thermodynamic limit of the corresponding
thermal equilibrium states is established in Sect.~4.
Condensates are considered in Sect.~5 and
the spontaneous breakdown of the translation symmetry
of condensates in
the thermodynamic limit is the topic of Sect.~6.
The article concludes with a short summary and an outlook
on interacting systems. 

\section{Resolvent algebra and dynamics}
\label{sec2}
\setcounter{equation}{0}

Since we are dealing with systems of an arbitrary finite number of
bosons, we make use of Fock space and the canonical creation
and annihilation operators acting on it. Fock space $\cF$ is spanned
by the $n$-fold symmetrized tensor products of the single particle 
states in $L^2(\RR^s)$, where $n \in \NN_0$ is
the particle number and $s$ is the dimension of
configuration space.
The creation and annihilation operators are denoted by
$a^*(f)$, respectively $a(g)$, where $f,g \in \cD(\RR^s)$.
Denoting by $\langle \, \cdot \, , \, \cdot \, \rangle$ the
scalar product in $L^2(\RR^s)$,
they satisfy the commutation relations
\be
   [a(g), a^*(f)] = \langle g, f \rangle 1 \, , \quad
    [a^*(g), a^*(f)] = [a(g), a(f)] = 0 \, .
\ee
We restrict these operators to 
the dense subspace $\cD(\RR^s) \subset L^2(\RR^s)$ of test functions
with compact support in order to cope with the distributional
properties of states which can appear in the limit of large particle numbers.

\medskip 
The resolvent algebra, encoding all properties of the
canonical operators, can be defined by algebraic relations 
and is faithfully represented on Fock space $\cF$ \cite{BuGr}.
Its building blocks are the resolvents 
\be \label{e.2.2} 
R(\lambda, f) \coloneqq (\lambda + a^*(f) + a(f))^{-1} \, ,
\quad \lambda \in \CC \backslash \RR \, , \ f \in \cD(\RR^s) \, .
\ee
They form an algebra of bounded operators on $\cF$ which is stable
under taking adjoints. Its closure with regard to the operator
norm on $\cF$ is a C*-algebra~$\fR$, the resolvent algebra.

\medskip
The resolvent algebra includes the observables which are of
interest in many body physics, such as $n$-particle
density operators. Since these do not change particle
numbers, they are fixed points under
the adjoint action of the unitaries $\mu \mapsto e^{i \mu N}$,
where~$N$ is the particle number operator on $\cF$. 
We refer to this action as gauge transformation. It will
be shown in Sect.~3 that the closed 
subalgebra $\fA \subset \fR$ of gauge-invariant operators,
the observable algebra, is generated by the resolvents 
\be \label{e.2.3} 
A(\lambda,f) \coloneqq (\lambda + a^*(f) a(f))^{-1} \, , \quad \lambda > 0 \, , \
f \in \cD(\RR^s) \, .
\ee
This algebra is also faithfully represented on $\cF$. It
leaves each $n$-particle subspace $\cF_n \subset \cF$
invariant and acts irreducibly on it, $n \in \NN_0$. The properties of the
algebra $\fA$ have been unraveled in \cite{Bu2}. What matters here
is the fact that the restricted maps  
\be
f \mapsto A(\lambda,f) \upharpoonright \cF_n \, , \quad f \in \cD(\RR^s) \, , 
\ee
extend continuously in the norm topology on $\cF_n$ to all 
$f \in L^2(\RR^s)$ for any $n \in \NN_0$. This implies that the limits
are represented by elements of~$\fA$ which may, however, depend
on the chosen representation space. Even though the operators
$A(\lambda,f)$ are not elements of $\fA$ for arbitrary square integrable
functions $f$,  we can make use of them in the following because
of this observation. Technically,
they are elements of the projective limit of the
family of represented algebras~$\fA$ for $n \in \NN_0$,
cf.~\cite{Bu2}. 

\medskip
The Hamiltonians of interest here have, on a dense domain
in~$\cF$, the form 
\be
H = \int \! d\bx \, \big(\bpartial a^*(\bx) \bpartial a(\bx)
+ ( c^2 \bx^2 + U(\bx) + c' ) \, a^*(\bx) a(\bx) \big) \, .
\ee
Here $\bpartial$ denotes the gradient and $U$ is any continuous
potential which vanishes at infinity. In order to cover also 
confined (trapped) systems, we include a harmonic
potential of arbitrary strength $c^2$;
the constant shift $c'$ may be interpreted as a 
chemical potential. Other regular confining
forces can be admitted, cf.\  \cite[Sect.~6]{Bu3},
but they will not be considered here. 
The adjoint action of the corresponding
unitary time translations on the observable algebra (the dynamics)
is denoted by $t \mapsto \alpha(t) \coloneqq \mbox{Ad} \, e^{itH}$, $t \in \RR$. 
One has for any $n \in \NN_0$
\be
\alpha(t)(A) \upharpoonright \cF_n \in \fA \upharpoonright \cF_n \, ,
\quad A \in \fA \, ,
\ee
\ie the dynamics preserves
the represented algebras and their projective
limit. Moreover, 
$t \mapsto \alpha(t)$ acts pointwise norm continuously
on these algebras. It is worth noting that the
dynamics retains these properties even when the Hamiltonian
includes interactions with continuous two-body potentials
that vanish at infinity, cf.\ \cite{Bu2} and the conclusions. 

\section{Thermodynamic limit of dynamics}
\label{sec3}
\setcounter{equation}{0}  

We consider now the dynamics of trapped bosons which
are confined in balls of radius $R>0$ by soft (harmonic) boundary forces.
The corresponding Hamiltonians are 
\be \label{e.3.1} 
H_R \coloneqq \int \! d\bx \, \bpartial a^*(\bx) \bpartial a(\bx)
+ c^2(R)  \int \! d\bx \, \theta(\bx^2 - R^2) (\bx^2 - R^2)
\, a^*(\bx) a(\bx) \big) \, ,
\ee
where $\theta$ is the Heaviside step function. The strength of
the repulsive forces at the boundary can be adjusted by
choosing $c(R)$. These Hamiltonians coincide with the
free Hamiltonian in the interior of the ball and are of the type 
introduced in the previous section. This is because the 
potential is obtained by adding to the shifted harmonic potential
$\bx \mapsto c(R)^2( \bx^2 - R^2)$ the continuous function 
$\bx \mapsto c(R)^2 \theta(R^2 - \bx^2)(R^2 - \bx^2)$, 
which vanishes outside the ball. 
          
\medskip
We will see that these dynamics approximate in the limit of large $R$
the free dynamics without repulsive boundary forces for a large 
choice of constants  $c(R)$. In accordance with the chosen
notation, the resulting Hamiltonian is denoted by $H_\infty$
(no forces at finite $R$). The following lemma is a key
ingredient in this analysis.
\begin{lemma} \label{l.3.1}
  Let $R \mapsto c(R)$ be polynomially bounded in the limit of large 
  $R$. Then, for any test function 
  $f \in \cD(\RR^s)$, one has
  \be
  \lim_{R \rightarrow \infty} \| (e^{-itH_R} - e^{-itH_\infty})f \|_2 = 0 \, ,
  \quad t \in \RR \, ,
  \ee
  where $\| \, \cdot \, \|_2$ denotes the norm on $L^2(\RR^s)$. 
    The limit is reached faster than any inverse power of $R$, uniformly for
    $t$ in compact sets. 
\end{lemma}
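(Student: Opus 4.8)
The plan is to reduce the statement to the one-particle level and then to a Duhamel comparison estimate whose only nontrivial input is the rapid spatial decay of the freely propagated wave packet. Since the Hamiltonians \eqref{e.3.1} are quadratic in the creation and annihilation operators, the unitaries $e^{-itH_R}$ are quasi-free and restrict on the one-particle space $\cF_1 \cong L^2(\RR^s)$ to the single-particle Schr\"odinger groups. With a slight abuse of the lemma's notation I continue to write $H_R = -\Delta + V_R$ and $H_\infty = -\Delta$ for the one-particle operators, where $V_R(\bx) = c(R)^2\, \theta(\bx^2 - R^2)(\bx^2 - R^2) \ge 0$ vanishes on the ball $\{\bx^2 < R^2\}$. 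The norm and the test function $f$ in the statement already live on $L^2(\RR^s)$, so it suffices to bound $\|(e^{-itH_R} - e^{-itH_\infty})f\|_2$ for $f \in \cD(\RR^s)$.

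First I would write the difference by the interpolation (Duhamel) formula
\[ e^{-itH_R}f - e^{-itH_\infty}f = -i \int_0^t e^{-i(t-\tau)H_R}\, V_R\, e^{-i\tau H_\infty} f \, d\tau \, . \]
This is legitimate because $u_\infty(\tau) \coloneqq e^{-i\tau H_\infty}f$ stays in the Schwartz class for all $\tau$ and depends smoothly on $\tau$; since $V_R$ is polynomially bounded, the Schwartz class lies in $\cD(H_R)$, so the integrand is strongly differentiable and the usual computation of $\frac{d}{d\tau}\big(e^{-i(t-\tau)H_R} e^{-i\tau H_\infty}\big) = i\, e^{-i(t-\tau)H_R}(H_R - H_\infty)e^{-i\tau H_\infty}$ applies. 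Because $e^{-i(t-\tau)H_R}$ is unitary, this reduces everything to
\[ \|(e^{-itH_R} - e^{-itH_\infty})f\|_2 \le \int_0^{|t|} \| V_R\, u_\infty(\tau) \|_2 \, d\tau \, , \]
i.e.\ to controlling how much mass the free evolution pushes into the support region $\{\bx^2 \ge R^2\}$ of $V_R$.

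The crux is therefore to show that $\| V_R u_\infty(\tau)\|_2$ is smaller than any inverse power of $R$, with a constant growing at most polynomially in $\tau$. In Fourier variables $\widehat{u_\infty(\tau)}(\bk) = e^{-i\tau \bk^2}\widehat{f}(\bk)$, and multiplication of $u_\infty(\tau)$ by a monomial $\bx^\alpha$ corresponds to the differential operator $(i\partial_{\bk})^\alpha$; each derivative that hits the phase produces a factor linear in $\tau$ and $\bk$, while $\widehat{f}$ is Schwartz. Hence, for every $N$, the weighted function $(1+|\bx|)^N u_\infty(\tau)$ is bounded in $L^\infty$ by a polynomial $Q_N(|\tau|)$. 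On $\{\bx^2 \ge R^2\}$ one has $V_R(\bx) \le c(R)^2\, \bx^2$, so $\| V_R u_\infty(\tau)\|_2^2 \le c(R)^4\, Q_N(|\tau|)^2 \int_{|\bx| \ge R} |\bx|^4 (1+|\bx|)^{-2N}\, d\bx$, and the remaining integral is $O(R^{\,s+4-2N})$. Using that $c(R)$ is polynomially bounded, $c(R)^2 \le C R^{\,p}$, one gets $\| V_R u_\infty(\tau)\|_2 \le C\, Q_N(|\tau|)\, R^{\,p + (s+4)/2 - N}$, whose exponent tends to $-\infty$ as $N \to \infty$. Inserting this into the Duhamel bound and taking the supremum over $\tau \in [-T,T]$ yields decay faster than any inverse power of $R$, uniformly for $t \in [-T,T]$.

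I expect the main obstacle to be exactly this last competition: the confining potential grows quadratically in $\bx$ \emph{and} carries the prefactor $c(R)^2$ that may itself grow polynomially in $R$, so one must establish that the spatial tail of the freely propagated packet beyond radius $R$ decays faster than any of these polynomial factors can grow. This is where the smoothness and compact support of $f$ enter decisively, through the Schwartz decay of $\widehat{f}$: they guarantee super-polynomial decay of $u_\infty(\tau,\bx)$ in $\bx$ with only polynomial growth in $\tau$, which is what defeats both the $c(R)^2$ and the $\bx^2$ factors and also gives the uniformity on compact $t$-intervals. Uniformity down to $\tau = 0$ needs no extra argument, since $u_\infty(0) = f$ has compact support and thus $V_R u_\infty(0) = 0$ for all large $R$.
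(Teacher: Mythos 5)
Your proposal is correct and follows essentially the same route as the paper's proof: a Duhamel-type bound reducing the problem to $\int_0^{|t|}\|(H_R-H_\infty)e^{-i\tau H_\infty}f\|_2\,d\tau$, followed by the observation that the freely evolved Schwartz function decays in $\bx$ faster than any polynomial (via Fourier-space integration by parts, with constants growing only polynomially in $\tau$), which beats both the $\bx^2$ growth of the potential and the polynomially bounded coupling $c(R)^2$. The only differences are cosmetic (your $(1+|\bx|)^N$ weights and explicit domain justification of the Duhamel formula versus the paper's $(1+\bx^2)^m$ weights and citation of ``standard arguments'').
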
  
\begin{proof}
By standard arguments one obtains the estimate 
  \be \label{e.3}
  \|  (e^{-itH_R} - e^{-itH_\infty})f \|_2 \leq
  \int_0^t \! du \, \|(H_R - H_\infty) e^{-iuH_\infty})f \|_2 \, .
  \ee
  Now
  \be \label{e.4} 
  \|(H_R - H_\infty) \, e^{-iuH_\infty}f \|_2^2 =
  c(R)^4 \int_{|\bx| \geq R} \! d\bx \, (\bx^2 - R^2)^2 \, 
  |(e^{-iuH_\infty} f)(\bx)|^2 , 
  \ee
  and, for any $m \in \NN$, one has
  \be
(1 + \bx^2)^m \, |(e^{-iuH_\infty} f)(\bx)| \leq (1 + u^{2m}) \, p_m(f) \, ,
  \ee
  where $p_m(f)$ is some Schwartz seminorm of $f$. The latter estimate
  follows from the equality
  \be
  (1 + \bx^2)^m \, (e^{-iuH_\infty} f)(\bx)
  = (2 \pi)^{-m/2} \int \! d\bp \, e^{i \bp \bx} (1 - \Delta_{\bp})^m
  e^{-iu\bp^2} \widetilde{f}(\bp) \, ,
  \ee
  which is obtained by partial integration in the Fourier integral,
  involving   
  the Laplace operator $\Delta_{\bp}$ in momentum space. Thus the integral in
  equation~\eqref{e.4} decreases faster than any inverse power of $R$,
  uniformly for $u$ varying in compact sets. So  
  the statement is obtained from the estimate~\eqref{e.3} by integration.
  \end{proof}  
It follows from this lemma that the unitaries
$e^{itH_R}$ converge on Fock space~$\cF$ in the limit of
large $R$ to $e^{itH_\infty}$
in the strong operator topology, $t \in \RR$. But this
result is not sufficient for our purposes. We need to
determine the properties of the corresponding automorphisms
on the observable algebra $\fA$, where the following technical
lemma greatly simplifies the analysis.
\begin{lemma} \label{l.3.2} 
  The linear span of all resolvents
  $A(\lambda,f) = (\lambda + a^*(f)a(f))^{-1}$ with $f \in \cD(\RR^s)$ and
  $\lambda > 0$ is norm-dense in $\fA$. 
\end{lemma}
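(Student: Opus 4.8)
The plan is to show that the linear span of the resolvents $A(\lambda,f)$ is dense in the gauge-invariant algebra $\fA$ by using the established fact that $\fA$ is generated (as a C*-algebra) by these resolvents. Since I may assume from the previous discussion that $\fA$ is the closed gauge-invariant subalgebra of $\fR$ generated by the $A(\lambda,f)$, the task reduces to verifying that products of such resolvents can be approximated in norm by linear combinations of single resolvents. The key structural input is the resolvent identity together with a computation of the commutators among the operators $a^*(f)a(f)$.

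First I would record that the self-adjoint generators $n(f) \coloneqq a^*(f)a(f)$ for $f \in \cD(\RR^s)$ span, together with their products, a commutative-up-to-correction structure that is controlled by the canonical commutation relations. The crucial point is that for two test functions $f,g$ the commutator $[n(f), n(g)]$ remains expressible in terms of lower-order creation/annihilation products, so that it can be reabsorbed after multiplying by resolvents. Concretely, I would exploit the operator identity
\be
A(\lambda,f)\, A(\mu,g) = A(\lambda,f) - A(\mu,g)
+ A(\lambda,f)\,\big(\mu - \lambda + n(g) - n(f)\big)\, A(\mu,g),
\ee
and then iterate: the correction term is again a product of two resolvents sandwiching a first-order expression, which I can expand using the commutation relations until the remainder carries a high enough inverse power of the resolvents to be norm-small. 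This is the mechanism by which products collapse into the closed linear span.

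The main obstacle will be organizing this reduction so that the remainder terms are genuinely controlled in the C*-norm of $\fR$ rather than merely on a dense domain. Because $a^*(f)a(f)$ is unbounded, I must always keep it flanked by resolvents and never let a bare number operator escape; the bookkeeping is to write each correction as $A(\lambda,f)\, B\, A(\mu,g)$ with $B$ a fixed bounded polynomial in the $a^\#$ that is already damped by the adjacent resolvents, and then estimate using $\|A(\lambda,f)\| \le \lambda^{-1}$ and the boundedness of expressions such as $n(f)A(\lambda,f)$ (which equals $1 - \lambda A(\lambda,f)$). The gauge invariance of every resolvent $A(\lambda,f)$ guarantees that all the intermediate expressions stay inside $\fA$, so no element outside the observable algebra is ever generated.

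Finally I would assemble these pieces: since arbitrary monomials in the generators reduce by the above identity to the closed linear span of single resolvents, and since $\fA$ is by construction the norm closure of the algebra generated by the $A(\lambda,f)$, the norm-closed linear span of the single resolvents already exhausts $\fA$. I expect the argument to be entirely norm-convergent with no spectral-measure subtleties, the only delicate step being the uniform norm bounds on the flanked correction terms, which I anticipate can be read off directly from the spectral calculus for the single positive self-adjoint operator $n(f)$ in each resolvent factor.
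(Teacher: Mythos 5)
Your proposal has two genuine gaps, and each one by itself is fatal. The first is that it begs the question. In this paper $\fA$ is \emph{defined} as the closed subalgebra of all gauge-invariant elements of $\fR$, i.e.\ as the fixed-point algebra under the adjoint action of $e^{i\mu N}$; the claim that this fixed-point algebra is generated by the resolvents $A(\lambda,f)$ is precisely what Sect.~2 announces will be shown in Sect.~3, and Lemma~\ref{l.3.2} \emph{is} that statement (in the stronger, linear-span form). So you may not ``assume from the previous discussion'' that $\fA$ is the C*-algebra generated by the $A(\lambda,f)$ and reduce the lemma to collapsing products of generators: proving that generation statement is the actual core of the lemma. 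The paper obtains it from a genuinely non-elementary external input: by the totality theorem of \cite{BuNu}, the linear span of the basic resolvents $R(\lambda,f)$ is norm-dense in all of $\fR$; averaging over the gauge group is a norm-continuous projection of $\fR$ onto $\fA$, so the gauge means of the basic resolvents span a norm-dense subspace of $\fA$; and each such mean is a continuous function vanishing at infinity of the single number operator $N(f)=\|f\|_2^{-2}a^*(f)a(f)$, hence lies in the closed span of the $A(\lambda',f)$, $\lambda'>0$, by partial fractions and Stone--Weierstrass in that abelian algebra. Nothing in your proposal replaces this input.

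Second, even granting generation, your reduction mechanism fails. Writing $n(f)=a^*(f)a(f)$, your displayed identity is false: since $\mu-\lambda+n(g)-n(f)=(\mu+n(g))-(\lambda+n(f))$, one has exactly
\be
A(\lambda,f)\big(\mu-\lambda+n(g)-n(f)\big)A(\mu,g) = A(\lambda,f)-A(\mu,g)\,,
\ee
so your right-hand side equals $2\big(A(\lambda,f)-A(\mu,g)\big)$, which cannot equal $A(\lambda,f)A(\mu,g)$ (take $f=g$ and $\lambda=\mu$: the product is $A(\lambda,f)^2\neq 0$ while the right side vanishes). The corrected identity reads
\be
(\mu-\lambda)\,A(\lambda,f)A(\mu,g) = A(\lambda,f)-A(\mu,g)+A(\lambda,f)\big(n(f)-n(g)\big)A(\mu,g)\,,
\ee
but iterating it is circular rather than convergent: substituting $A(\lambda,f)\,n(f)=1-\lambda A(\lambda,f)$ and $n(g)\,A(\mu,g)=1-\mu A(\mu,g)$ into the correction term reproduces precisely the term $(\mu-\lambda)A(\lambda,f)A(\mu,g)$ that you are trying to eliminate --- there is no remainder that becomes small in norm. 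The elementary algebra only settles the case of a single test function, where everything commutes and partial fractions suffice; for non-proportional $f,g$ the operators $n(f)$ and $n(g)$ do not commute, commutators such as $[n(f),n(g)]$ are not expressible through $n(f)$ and $n(g)$ again, and the fact that cross-products $A(\lambda,f)A(\mu,g)$ nevertheless lie in the closed linear span of single resolvents is exactly the kind of deep totality statement that \cite{BuNu} was written to establish. The paper's proof is arranged so that such cross-products never have to be touched: every gauge mean of a basic resolvent involves only one number operator.
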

\begin{proof}
  The linear span of all basic resolvents
  $R(\lambda,f) = (i \lambda + a^*(f) + a(f))^{-1}$,
  where \mbox{$f \in \cD(\RR^s)$}
  and $\lambda \in \RR \backslash \{ 0 \}$, 
  is norm-dense in the full resolvent algebra $\fR$, cf.\ \cite{BuNu}.
  Hence, the linear span of their means over the
  gauge transformations is norm-dense in the observable
  algebra $\fA$. Given $f \neq 0$, the mean of the 
  basic resolvents $R(\lambda, f)$ are continuous functions
  vanishing at infinity of the
  corresponding number operator
  $N(f) \coloneqq \|f \|_2^{-2} a^*(f) a(f)$;
  they generate an abelian subalgebra of $\fA$. Since the
  span of the resolvents given in the lemma is norm-dense 
  in that algebra, the statement follows. \end{proof}  
We consider now the restriction of the algebra of observables $\fA$ to 
the subspaces $\cF_n$ of Fock space
with particle number $n \in \NN$. The spatial
region occupied by the particles is controlled 
by the dynamics $\alpha_R$ fixed by the Hamiltonians
$H_R$, $R > 0$. As before, we restrict 
our attention to coupling functions $R \mapsto c(R)$ in
$H_R$ which are polynomially bounded.
Denoting by $\| \cdot \|_n$ the operator norm on $\cF_n$, the
following result obtains.

\begin{lemma} \label{l.3.3}
  Let $R_n$, $n \in \NN$, be an increasing sequence of radii
  such that $\lim_{n \rightarrow \infty} n \, R_n^{-k} = 0$ for some $k > 0$. Then
  \be
  \lim_{n \rightarrow \infty}
  \| \alpha_{R_n}(t)(A) - \alpha_\infty(t)(A) \|_n = 0 \quad \text{for}
  \quad A \in \fA \, ,
  \ee
  uniformly in $t$ in compact sets. 
\end{lemma}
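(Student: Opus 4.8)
The plan is to reduce the statement to a single resolvent, on which the free dynamics acts in an explicit way, and then to track carefully how the resulting estimate depends on the particle number $n$. First I would invoke Lemma~\ref{l.3.2}: the linear span of the resolvents $A(\lambda,f)$, $f\in\cD(\RR^s)$, $\lambda>0$, is norm-dense in $\fA$, and by the discussion of Sect.~\ref{sec2} both $\alpha_{R}(t)$ and $\alpha_\infty(t)$ restrict to (isometric) automorphisms of $\fA\upharpoonright\cF_n$. So, given $A\in\fA$ and $\varepsilon>0$, I would choose a finite combination $B=\sum_j c_j\,A(\lambda_j,f_j)$ with $\|A-B\|<\varepsilon$ and use $\|A-B\|_n\le\|A-B\|$ together with isometry to bound the two outer terms in
\be
\|\alpha_{R_n}(t)(A)-\alpha_\infty(t)(A)\|_n\le 2\|A-B\|_n+\|\alpha_{R_n}(t)(B)-\alpha_\infty(t)(B)\|_n
\ee
by $2\varepsilon$, uniformly in $n$ and $t$, noting that $B$ does not depend on $t$. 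It then suffices to treat a single resolvent $A(\lambda,f)$.

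For a single resolvent I would use that $H_R$ is the second quantization of the one-particle Schr\"odinger operator appearing in Lemma~\ref{l.3.1} (denoted by the same symbol), so that the free-field dynamics merely transports the test function along the one-particle evolution, $e^{itH_R}a^*(f)a(f)e^{-itH_R}=a^*(e^{itH_R}f)\,a(e^{itH_R}f)$, where $e^{itH_R}$ acts on $L^2(\RR^s)$. By the functional calculus this gives $\alpha_R(t)(A(\lambda,f))=A(\lambda,e^{itH_R}f)$, and likewise for $\alpha_\infty$. Writing $g:=e^{itH_R}f$ and $h:=e^{itH_\infty}f$, the resolvent identity yields $A(\lambda,g)-A(\lambda,h)=A(\lambda,g)\big(a^*(h)a(h)-a^*(g)a(g)\big)A(\lambda,h)$, and I would split the middle operator as $a^*(h)a(h-g)+a^*(h-g)a(g)$. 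On $\cF_n$ the annihilation and creation operators of a mode $\phi$ satisfy $\|a(\phi)\|_n\le\sqrt{n}\,\|\phi\|_2$ and $\|a^*(\phi)\|_n\le\sqrt{n+1}\,\|\phi\|_2$, the resolvents are bounded by $\lambda^{-1}$, and the evolutions preserve $\|f\|_2$. Combining these gives
\be \label{e.plan}
\|\alpha_{R}(t)(A(\lambda,f))-\alpha_\infty(t)(A(\lambda,f))\|_n
\le \frac{2\,n\,\|f\|_2}{\lambda^2}\,\|(e^{itH_R}-e^{itH_\infty})f\|_2 .
\ee

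Finally I would feed in Lemma~\ref{l.3.1}: for every $m\in\NN$ the right-hand factor is bounded by $C_m(f)\,R^{-m}$ for large $R$, uniformly for $t$ in compact sets. Choosing $m=k$ and setting $R=R_n$, the bound \eqref{e.plan} is dominated by a constant times $n\,R_n^{-k}$, which tends to $0$ by hypothesis, uniformly for $t$ in compact sets. Summing over the finitely many terms of $B$ and letting $\varepsilon\to0$ then closes the argument, with the uniformity in $t$ inherited from Lemma~\ref{l.3.1}.

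The one genuine obstacle I anticipate is the bookkeeping of the $n$-dependence in \eqref{e.plan}. Each of the two terms in the commutator carries one factor $\sqrt{n}$ from an annihilation operator and one from a creation operator on $\cF_n$, so the single-resolvent difference grows \emph{linearly} in the particle number, and the resolvents (bounded merely by $\lambda^{-1}$) do not remove this growth. The whole scheme therefore works only because the linear factor $n$ is defeated by the super-polynomial smallness of the one-particle difference supplied by Lemma~\ref{l.3.1}; the precise quantitative balance required is exactly the hypothesis $n\,R_n^{-k}\to0$.
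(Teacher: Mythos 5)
Your proposal is correct and follows essentially the same route as the paper: reduction via Lemma~\ref{l.3.2} to the resolvents $A(\lambda,f)$, the covariance $\alpha_R(t)(A(\lambda,f))=A(\lambda,e^{itH_R}f)$, a resolvent-identity estimate yielding exactly the paper's bound $2n\lambda^{-2}\|f\|_2\,\|(e^{itH_{R_n}}-e^{itH_\infty})f\|_2$, and then Lemma~\ref{l.3.1} combined with the hypothesis $n\,R_n^{-k}\to0$. The only cosmetic difference is that the paper organizes the $n$-dependence through the estimate $\|a(g)N^{-1/2}\|_n\le\|g\|_2$ with factors of $N^{-1/2}$, whereas you use the equivalent direct bounds $\|a(\phi)\|_n\le\sqrt{n}\,\|\phi\|_2$, $\|a^*(\phi)\|_{n-1}\le\sqrt{n}\,\|\phi\|_2$ on the fixed particle-number subspaces.
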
  

\noindent
\textbf{Remark:} The constraints on 
the radii $R_n$ of the regions, depending on
the particle number~$n$, cover the familiar  
assumption that the mean particle density $n R_n^{-s}$ is held fixed
for increasing $n$. We will make other choices later. 

\begin{proof}
  Since the action of automorphisms on
  the algebra is continuous in norm, it suffices
  to establish the statement for a dense subalgebra of $\fA$.
  Hence, in view of Lemma~\ref{l.3.2}, we need to consider only 
  the total set of resolvents
  ${A}(\lambda,f)$ with $f \in \cD(\RR)$, $\lambda > 0$.
  Now
  $\alpha_R(t)(A(\lambda,f)) = A(\lambda, e^{itH_R}f)$  
  for $R \in \RR \bigcup {\infty}$. Thus, denoting the particle
  number operator on $\cF$ by $N$, one has for $n \geq 1$
  \begin{align}
  & \| \alpha_{R_n}(t)(A(\lambda,f)) -
    \alpha_{\infty}(t)(A(\lambda,f)) \|_n \nonumber \\ 
  & = n  \, \| N^{-1/2}
  (A(\lambda,e^{itH_{R_n}}f)  - A(\lambda,e^{itH_\infty}f)) N^{-1/2}  \|_n \nonumber \\
  & \leq  n \lambda^{-2} \,  \| N^{-1/2}
  \big(a^*(e^{itH_{R_n}}f) a(e^{itH_{R_n}}f) -
  a^*(e^{itH_{\infty}}f) a(e^{itH_{\infty}}f) \big) N^{-1/2} \|_n \nonumber \\
  & \leq 2n \lambda^{-2}  \| a(e^{itH_\infty}f) N^{-1/2} \|_n 
  \, \|a((e^{itH_{R_n}} - e^{itH_\infty})f) N^{-1/2} \|_n \, . 
  \end{align}
  Using the well-known estimate $\| a(g) N^{-1/2} \|_n \leq \| g \|_2$
  for $g \in L^2(\RR^s)$, one arrives at the bound 
  \be
   \| \alpha_{R_n}(t)(A(\lambda,f)) -
   \alpha_{\infty}(t)(A(\lambda,f)) \|_n \leq 2n \lambda^{-2} \| f \|_2 \, 
   \| (e^{it H_{R_n}} - e^{it H_\infty})f \|_2 \, .
  \ee
   The statement then follows from Lemma \ref{l.3.1}
\end{proof}  

This lemma implies that 
the sequence of dynamics $\alpha_{R_n}$, $n \in \NN_0$, converges
pointwise on $\fA$ to $\alpha_\infty$ in the operator norm
on any subspace of $\cF$ with maximal particle number
$l \in \NN_0$. For operators $B$ which 
commute with the particle number operator, this 
norm is given by 
$\vertiii{B}_l := \sup \sum_{k = 0}^l c_k \| B \|_k$, where  
the supremum is taken over all non-negative coefficients
$c_k$, $k = 0, \dots , l$, which sum up to $1$. 
For elements $A \in \fA$ the norms $k \mapsto \| A \|_k$
are monotonically increasing, cf.\ \cite[Lem.\ 3.4]{Bu2},
so one has in this case $\vertiii{A}_l = \|A\|_l$, $l \in \NN_0$, 
This leads to the following result. 

\begin{proposition} \label{p.3.4}
  Let $\alpha_{R_n}$, $n \in \NN$,  be a sequence of
  dynamics with radii satisfying the
  condition given in Lemma \ref{l.3.3} and
  let $l \in \NN_0$. Then, for any $A \in \fA$, 
  \be
  \lim_{n \rightarrow \infty} 
  \vertiii{\alpha_{R_n}(t)(A) - \alpha_\infty(t)(A)}_l = 0 \, 
  \ee
  uniformly in $t$ in compact sets. 
\end{proposition}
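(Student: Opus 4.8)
The plan is to reduce Proposition \ref{p.3.4} directly to Lemma \ref{l.3.3}, since the two statements are almost identical except that the error is now measured in the triple-norm $\vertiii{\,\cdot\,}_l$ rather than in a single particle-sector norm $\|\cdot\|_n$. First I would invoke the explicit description of the triple-norm given in the paragraph preceding the proposition, namely $\vertiii{B}_l = \sup \sum_{k=0}^l c_k \|B\|_k$ over convex coefficients. For any fixed operator $B$ commuting with $N$ (and $\alpha_{R_n}(t)(A) - \alpha_\infty(t)(A)$ does commute with $N$, since the $\alpha_R$ are gauge-covariant and $A \in \fA$ is gauge-invariant), the convex supremum is attained at the vertex maximizing $\|B\|_k$, so $\vertiii{B}_l = \max_{0 \le k \le l} \|B\|_k$.

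The second step is to observe that this reduces the claim to controlling finitely many sector norms simultaneously. Writing $D_n(t) \coloneqq \alpha_{R_n}(t)(A) - \alpha_\infty(t)(A)$, I would estimate
\be
\vertiii{D_n(t)}_l = \max_{0 \le k \le l} \| D_n(t) \|_k \le \sum_{k=0}^{l} \| D_n(t) \|_k \, .
\ee
Lemma \ref{l.3.3} gives, for each fixed $k \in \NN$, that $\|D_n(t)\|_k \to 0$ as $n \to \infty$, uniformly for $t$ in compact sets; the $k=0$ term is trivial because $\cF_0$ is the one-dimensional vacuum sector on which every $\alpha_R(t)$ acts identically, so $\|D_n(t)\|_0 = 0$. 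A finite sum of sequences each converging to $0$ uniformly on compacts converges to $0$ uniformly on compacts, which yields the assertion.

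The only point requiring a moment's care, and the closest thing to an obstacle, is the identity $\vertiii{D_n(t)}_l = \max_k \|D_n(t)\|_k$: one must confirm that $D_n(t)$ indeed commutes with $N$ so that the stated formula for the triple-norm applies. This follows because $e^{i\mu N}$ commutes with each $e^{itH_R}$ (the Hamiltonians $H_R$ and $H_\infty$ conserve particle number), so conjugation by $e^{itH_R}$ preserves gauge invariance and $\alpha_{R_n}(t)(A)$, $\alpha_\infty(t)(A)$ are both gauge-invariant, hence block-diagonal in the particle number. Once this is in hand, the passage from the single-sector estimate of Lemma \ref{l.3.3} to the triple-norm estimate is immediate, and no genuinely new analytic input beyond Lemma \ref{l.3.3} is needed.
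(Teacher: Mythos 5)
There is a genuine gap at the decisive step. Lemma \ref{l.3.3} does not assert that $\|D_n(t)\|_k \to 0$ for each \emph{fixed} sector $k$; its conclusion is the diagonal statement $\lim_{n \to \infty}\| \alpha_{R_n}(t)(A) - \alpha_\infty(t)(A) \|_n = 0$, in which the sector index of the norm is tied to the index of the radius $R_n$. Your sentence ``Lemma \ref{l.3.3} gives, for each fixed $k \in \NN$, that $\|D_n(t)\|_k \to 0$'' therefore attributes to the lemma something it does not say, and for general block-diagonal operators the norms on different particle sectors are unrelated, so there is no formal way to pass from the diagonal statement to the fixed-$k$ one. Bridging exactly this index mismatch is the whole content of Proposition \ref{p.3.4}, so the step cannot be waved through.

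The missing ingredient is the monotonicity of the sector norms, recorded in the paragraph preceding the proposition: for elements of $\fA$ --- and of the projective limit of the represented algebras, which is where $D_n(t)$ lives, since the dynamics preserves $\fA \upharpoonright \cF_k$ --- the map $k \mapsto \|B\|_k$ is increasing, cf.\ \cite[Lem.\ 3.4]{Bu2}. With this, $\vertiii{D_n(t)}_l = \|D_n(t)\|_l \leq \|D_n(t)\|_n$ whenever $n \geq l$, and Lemma \ref{l.3.3} concludes; this is precisely the paper's two-line proof, and it is also the cheapest repair of yours (each of your finitely many terms satisfies $\|D_n(t)\|_k \leq \|D_n(t)\|_n$ for $n \geq k$). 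Alternatively, fixed-$k$ convergence can be obtained by re-running the \emph{proof} of Lemma \ref{l.3.3} on the sector $\cF_k$ --- the prefactor there becomes $2k\lambda^{-2}$ instead of $2n\lambda^{-2}$, and Lemma \ref{l.3.1} still applies because $R_n \to \infty$ --- but that is an appeal to the proof, not to the statement you quote. The remaining points of your argument (gauge invariance of $D_n(t)$, attainment of the convex supremum at a vertex, triviality of the vacuum sector) are all correct, and once the monotonicity input is added your route coincides with the paper's.
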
  
\begin{proof}
  Given $l$, let $n \geq l$. Then
  \be 
  \vertiii{\alpha_{R_n}(t)(A) - \alpha_\infty(t)(A)}_l 
  \leq \| \alpha_{R_n}(t)(A) - \alpha_\infty(t)(A) \|_{n} \, ,
  \ee
  so the statement follows from Lemma \ref{l.3.3}.
  \end{proof}  

It follows from this proposition that all weak limit points of
the family $\Sigma_n$ of states with not more than $n$
particles, which are stationary under the action of
$\alpha_{R_n}$, $n \in \NN$, are stationary under the action of
the limit dynamics $\alpha_\infty$. Since 
$\alpha_\infty$ leaves $\fA$ invariant on all subspaces
with limited particle number, 
it follows that in the GNS-representation
induced by the limit states the dynamics is
implemented by the adjoint action of
unitary operators $t \mapsto V_\infty(t)$.
These differ in general substantially (\eg with
regard to spectral properties) from the exponential
function of the Hamiltonian $H_\infty$ on Fock space, 
but the time evolution of the observables remains unchanged.

\section{Equilibrium states}
\label{sec4}
\setcounter{equation}{0}  

The expectation values of observables in thermal equilibrium
states~$\omega_\beta$, described by Gibbs ensembles, are given by 
\be
\omega_{\beta}(A) \coloneqq \mbox{Tr} \, e^{-\beta H} A \, / \, 
\mbox{Tr} \, e^{-\beta H} \, , \quad A \in \fA \, .
\ee
Here $\beta$ is the inverse temperature
and $H$ is the Hamiltonian to which
other constants of motion may have been added, 
such as multiples of the particle number
operator. The implicit 
condition underlying this formula is the
assumption that the exponential function of
$H$ has a finite trace for $\beta > 0$. 
Depending on the ensembles which one
considers (canonical or grand canonical), the trace is to be taken
over the $n$-particle subspaces $\cF_n \subset \cF$ or over Fock space~$\cF$.
The condition on the trace is generally satisfied for trapped
systems with sufficiently strong confining
forces, but it fails in the thermodynamic 
limit. What remains in this limit are
characteristic properties of the
correlation functions inherited from
the approximating states. They satisfy the KMS-condition
\cite[Ch.\ V]{Ha}, which formally reads 
\be
t \mapsto \omega_\beta(A \, \alpha(t + i \beta)(B))
= \omega_\beta(\alpha(t)(B) \, A) \, , \quad A, B \in \fA \, . 
\ee
It is to be understood as a relation between the 
boundary values of analytic extensions 
of the correlation functions to the
strip $\{ z \in \CC : 0 \leq \mbox{Im} z \leq \beta \}$.

\medskip
The KMS-condition is not only a characteristic feature 
of thermal equilibrium states, but it is also useful in
computations. We briefly recall this fact in case of a  
dynamics describing particles which do not interact with
each other. Using an obvious notation, it acts 
on the creation operators by 
\be
t \mapsto \alpha(t)(a^*(f)) = a^*(e^{itH} f) \, , \quad f \in \cD(\RR^s) \, .
\ee
Assuming that $H$ is positive, one obtains the following 
equalities, making use of the KMS-condition
for creation and annihilation operators and their   
commutation relations,  
\be
\omega_\beta(a^*(f) a(g)) =
\omega_\beta(a(g) a^*(e^{- \beta H} f)) =
\langle g, e^{- \beta H} f \rangle  + \omega_\beta(a^*(e^{- \beta H} f) a(g)) \, .
\ee
Applying the inverse of $(1 - e^{-\beta H})$ to $f$,  
one arrives at the familiar one-particle density
matrix 
\be
\omega_\beta(a^*(f) a(g)) =  \langle g, (e^{\beta H} - 1)^{-1}  f \rangle
\, , \quad f,g \in \cD(\RR^s) \, .
\ee

The computations of the
higher density matrices in the canonical ensembles are a little
cumbersome. Since they greatly simplify in the grand
canonical ensembles, we restrict our attention to them
and consider the Hamiltonians $H_\mu \coloneqq H - \mu N$,
where $N$ is the particle number operator and
$\mu < 0$ the chemical potential. The limit
$\mu \nearrow 0$ will be discussed later.
The corresponding thermal equilibrium states for non-interacting dynamics
are known to be quasifree, \ie they are fixed by the one-particle density
matrices \cite{RoSiTe}.
Hence also the correlation functions of the  observables in~$\fA$ 
can be expressed in terms of them \mbox{\cite[Sect.\ 3]{BaBu}}.
For example, the expectation values of the fundamental
resolvents \eqref{e.2.2}, 
forming a total set in $\fR$ \cite{BuNu}, and of their mean over
the gauge transformations, being contained in $\fA$, are given by
\be
\omega_{\beta,\mu}(R(\lambda,f)) = \int_0^\infty \! du \, e^{-u \lambda}
e^{-(u^2/2) \langle f, (e^{\beta H_\mu} - 1)^{-1} f \rangle} \, , \quad
\lambda > 0 \, , \ f \in \cD(\RR^s) \, .
\ee
Thus, in order to prove the existence of the thermodynamic limit
for non-interacting, trapped thermal systems,
it is sufficient to establish the convergence of the
corresponding one-particle
density matrices. The convergence of the expectation
functionals then follows from the dominated convergence theorem.
Their limit leads by the GNS-construction 
to a representation of the observables $\fA$ on a Hilbert space 
that describes the limit states. 

\medskip
After this outline of basic facts, we turn to the
analysis of the equilibrium states fixed by the Hamiltonians $H_R$,
defined in \eqref{e.3.1}. Since apart from the
inverse temperature $\beta > 0$ also a chemical potential
$\mu < 0$ is taken into account, the one-particle
density matrices are given by
\be \label{e.4.7} 
\omega_{\beta, \mu, R}(a^*(f) a(g)) =
\langle g, (e^{\beta (H_R - \mu)} - 1)^{-1} f \rangle
  \, , \quad f,g \in \cD(\RR^s) \, .
\ee
The existence of the thermodynamic limit of the
corresponding states on $\fA$ can then be established
with the help of Lemma \ref{l.3.1}. 
\begin{proposition} \label{p.4.1}
  Let the coupling parameters $R \mapsto c(R)$ in equation \eqref{e.3.1}
  be polynomially bounded. The states $\omega_{\beta, \mu, R}$
  converge pointwise on $\fA$ in the limit of large $R$
  to the spatially homogeneous state
  $\omega_{\beta, \mu, \infty}$, which is
  determined by the free Hamiltonian~$H_\infty$.
\end{proposition}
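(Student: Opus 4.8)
The plan is to follow the route sketched before the statement: first establish convergence of the one-particle density matrices \eqref{e.4.7}, and then transfer this to the correlation functions of all observables in $\fA$ by exploiting the quasifree structure of the states together with the dominated convergence theorem. Since each $\omega_{\beta,\mu,R}$ is a state, hence of norm one, it suffices to prove pointwise convergence on a total subset of $\fA$. Because the Gibbs functional is gauge invariant (the density matrix $e^{-\beta(H_R - \mu N)}$ commutes with every $e^{i\mu' N}$), the value of $\omega_{\beta,\mu,R}$ on the gauge mean of a basic resolvent $R(\lambda,f)$ coincides with its value on $R(\lambda,f)$ itself. As recalled in the proof of Lemma~\ref{l.3.2}, these gauge means span a norm-dense subspace of $\fA$, so I would work throughout with the explicit formula
\be
\omega_{\beta,\mu,R}(R(\lambda,f)) = \int_0^\infty \! du \, e^{-u\lambda}\,
e^{-(u^2/2)\,\langle f, (e^{\beta(H_R - \mu)} - 1)^{-1} f\rangle}\, .
\ee

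The core of the argument is the limit of the one-particle density matrix $\langle f, (e^{\beta(H_R-\mu)} - 1)^{-1} f\rangle$ as $R\to\infty$, where $H_R$ is the positive self-adjoint one-particle Hamiltonian on $L^2(\RR^s)$. Lemma~\ref{l.3.1} provides strong convergence $e^{-itH_R}\to e^{-itH_\infty}$ on all of $L^2(\RR^s)$; although stated there for $f\in\cD(\RR^s)$, it extends by density and the bound $\|e^{-itH_R}\|=1$. Strong convergence of these unitary groups for every $t\in\RR$ is equivalent, by the Trotter--Kato theorem, to strong resolvent convergence $H_R\to H_\infty$, and the latter implies $h(H_R)\to h(H_\infty)$ strongly for every continuous function $h$ vanishing at infinity. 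Since $\mu<0$, the function $h(x):=(e^{\beta(x-\mu)}-1)^{-1}$ is of exactly this kind on the common spectrum $[0,\infty)$: it is continuous there, tends to $0$ as $x\to\infty$, and is bounded by $(e^{-\beta\mu}-1)^{-1}$ uniformly in $R$ (as one sees at once from the geometric series $h(x)=\sum_{n\geq1}e^{n\beta\mu}e^{-n\beta x}$). Extending $h$ to a $C_0$ function on $\RR$ without altering its values on $[0,\infty)$, one concludes $\langle f, h(H_R) f\rangle\to\langle f, h(H_\infty) f\rangle$.

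With the one-particle convergence secured, the proposition follows quickly. For fixed $f$ and $\lambda>0$ the integrand in the displayed formula converges pointwise in $u$; and since $h\geq0$ gives $\langle f, h(H_R) f\rangle\geq0$, the integrand is dominated, uniformly in $R$, by the integrable function $u\mapsto e^{-u\lambda}$. Dominated convergence then yields convergence of $\omega_{\beta,\mu,R}(R(\lambda,f))$, hence of the states on the gauge means that form a total set in $\fA$; a standard $\varepsilon/3$ approximation, using that the functionals are uniformly of norm one, upgrades this to pointwise convergence on all of $\fA$. The limit is the quasifree state determined by the one-particle density matrix $(e^{\beta(H_\infty-\mu)}-1)^{-1}$, which is translation invariant because $H_\infty$ is. The one step I expect to require real care is the bridge from Lemma~\ref{l.3.1}, a statement about the real-time dynamics, to the Euclidean object $(e^{\beta(H_R-\mu)}-1)^{-1}$: real-time strong convergence does not directly give convergence of functions of the generators, so one must pass through strong resolvent convergence and the $C_0$ functional calculus, relying on $\mu<0$ for the uniform boundedness that makes the passage legitimate.
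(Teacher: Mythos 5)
Your proposal is correct and follows the same overall route as the paper: reduce to convergence of the one-particle density matrices via the quasifree structure and the explicit resolvent formula, obtain that convergence from Lemma~\ref{l.3.1}, conclude by dominated convergence, and read off homogeneity from the translation invariance of $H_\infty$. The only point where you diverge is the bridge from real-time convergence to convergence of the Bose--Einstein function of the Hamiltonians: the paper passes from the uniform-on-compacts convergence of $e^{itH_R}$ to strong convergence of $h(H_R)$ for Schwartz $h$ ``by integration,'' i.e.\ by writing $h(H_R)$ as a Fourier integral over the unitary group, and accordingly extends $E \mapsto (e^{\beta(E-\mu)}-1)^{-1}$ to a Schwartz function on $\RR$; you instead invoke the Trotter--Kato theorem to get strong resolvent convergence $H_R \to H_\infty$ and then the $C_0$ functional calculus, needing only a continuous extension vanishing at infinity. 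Both bridges are legitimate; the paper's is self-contained and exploits the uniformity it already established in Lemma~\ref{l.3.1}, while yours rests on standard textbook theorems and requires slightly less of the function $h$. Your additional details (gauge invariance of the Gibbs functional identifying the state's values on gauge means with its values on basic resolvents, the uniform domination by $e^{-u\lambda}$, and the $\varepsilon/3$ extension from a total set to all of $\fA$) are exactly the steps the paper leaves implicit in its preamble to Section~4.
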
  
\begin{proof}
As was explained, it suffices to establish the convergence of the
one-particle density matrices. According to Lemma \ref{l.3.1}, 
the unitaries $t \mapsto e^{itH_R}$ converge for large $R$ to
$t \mapsto e^{itH_\infty}$ on $L^2(\RR^s)$, 
uniformly for $t$ in compact sets. By integration, it follows that
the operators $h(H_R)$ converge on $L^2(\RR^s)$ to $h(H_\infty)$
for any Schwartz test function $h \in \cS(\RR)$.

\medskip 
Since $\mu < 0$, the function $E \mapsto (e^{\beta(E - \mu)} - 1)^{-1}$ on
the positive axis can be extended smoothly to negative values,
resulting in a Schwartz test function. As the 
Hamiltonians $H_R$ are positive, it implies that the one-particle
density matrices \eqref{e.4.7} converge for large $R$ to the
matrix determined by $H_\infty$, proving the
convergence of the states. That the limit functionals are 
spatially homogeneous is an immediate consequence of the
fact that $H_\infty$ commutes with translations, completing the proof. 
\end{proof}

We conclude that in the construction of homogeneous equilibrium states
one does not need to proceed from box approximations with specific 
boundary conditions. One can deal from the outset
with the theory in infinite space $\RR^s$ and model soft 
confining forces quite arbitrarily. Let us mention as an aside
that this applies also to the limit of macroscopic quantities, such as 
the free energy density (pressure). 

\section{Bose-Einstein condensates}
\label{sec5}
\setcounter{equation}{0}  

Having determined the thermodynamic limit states  
on $\fA$, one can turn to their analysis.
Of particular interest is the particle density in 
compact regions $\bO \subset \RR^s$
which can be calculated with the help of the corresponding particle
number operator $N(\bO)$. Picking 
an orthonormal basis $e_k$, $k \in \NN_0$, of test functions
in~$L^2(\bO)$, it is formally given by
\be
N(\bO) \coloneqq \lim_{l \rightarrow \infty} \sum_{k = 0}^l
\, a^*(e_k) a(e_k) \, .
\ee 
Its properties can be analyzed rigorously, noting that
the operators
\be
A_l(\lambda) \coloneqq (\lambda + \sum_{k = 0}^l \, a^*(e_k) a(e_k) )^{-1} 
\, , \quad \lambda > 0 \, ,
\ee
are positive elements of the observable algebra $\fA$ which are monotonically
decreasing with increasing $l \in  \NN$. Thus they converge for
large $l$ in the strong operator topology in
all representations of $\fA$. If their limit has a trivial kernel in a 
representation space, the operator $N(\bO)$ is densely defined
and can be recovered from the observables. One says in this case
that the limit state is locally normal. Technically, the
corresponding GNS-representation of the algebra $\fA(\bO)$
of observables in $\bO$ is quasi (\ie up to multiplicities)
equivalent to the Fock representation, cf.\ \cite{DADoRu}.
The other possibility 
is that the limit vanishes, which indicates an
infinite number of particles in $\bO$. This
is expected to happen in interacting theories with
attractive two-body forces, where the states collapse in
the thermodynamic limit. But also in case of repulsive or absent forces it can 
happen that the local particle number diverges if a  
particular single particle state is infinitely occupied.
It would describe an extreme case of condensation,
named proper condensate in \cite{BaBu,Bu3}. In order to
decipher these cases, a more detailed analysis is required. 

\medskip
The thermal limit states $\omega_{\beta,\mu,\infty}$,
constructed in Sect.~\ref{sec4}, have finite local
particle numbers and hence are locally normal for given
$\beta > 0$ and $\mu < 0$. 
In low dimensions, $s=1,2$, the particle density tends to
infinity if $\mu \nearrow 0$; yet if $s \geq 3$, it stays bounded
for any given temperature. In the latter case this
feature is taken as an indication of the onset of Bose-Einstein condensation.
On the other hand, it is often argued that there is no condensation in low
dimensions. Our algebraic approach sheds new light on these
issues. 

\medskip
A key element in the description of condensates, \ie multiple tensor
products of a particular one-particle state, is based on the observation
that such states often hardly perturb stationary states.
So these condensates can coexist with them for a long time.
In simple cases, such inert 
single particle states are eigenstates of the Hamiltonian. But also in the
presence of interaction, such states can appear as significant
contributions to specific many particle states.
This simple picture breaks down, however, in the thermodynamic
limit, where condensates must be described by distributions.

\medskip
In the present model, the trapped ensembles
can be complemented by coherent configurations of 
eigenstates of $H_R$, describing condensates.
Given a normalized eigenfunction
$h_R \in L^2(\RR^s)$ for the eigenvalue $\epsilon_R > 0$, 
these coherent configurations are generated 
by adjoint action of the 
unitary operators $e^{i \kappa (a^*(h_R) + a(h_R))}$, $\kappa \in \RR$,  
(Weyl automorphisms). One can perturb states by composing
them with these automorphisms.
The occupation numbers of $h_R$ can then be determined
by the number operator $N(h_R) \coloneqq a^*(h_R)a(h_R)$.
Since the Weyl automorphisms
leave the resolvent algebra $\fR$ invariant,
this step amounts to perturbing in the original state the underlying
Hamiltonian~$H_R$ by adding multiples of $(a^*(h_R) + a(h_R))$.
Thus it no longer commutes with~$N(h_R)$ and describes the
coupling of the system with a reservoir of particles with
wavefunction $h_R$.
Note that the algebra of observables $\fA$ is
not stable under the action of the perturbed dynamics. 
But if one composes a gauge-invariant 
quasifree state with the Weyl automorphisms, the
resulting state, restricted to $\fA$, is still
quasifree. It is fixed by the one-particle density matrix
in the original state and gauge invariant products of
one-point functions.

\medskip 
Returning to the quasifree and gauge-invariant
equilibrium states $\omega_{\beta, \mu, R}$, one obtains
for the one-particle density matrix in the
perturbed states in an obvious notation, $\kappa \in \RR$, 
\be \label{e.5.3}
\omega_{\beta, \mu, \kappa, R}(a^*(f)a(g)) = 
  \langle g , (e^{\beta(H_R - \mu)} - 1)^{-1} f \rangle +
  \kappa^2 \,   \langle h_R , f \rangle  \langle g, h_R \rangle  \, . 
\ee
The first term on the right hand side describes the thermal
cloud and the second one, \viz \ the gauge-invariant product of
one-point functions, describes the condensate. Note that the
chemical potential $\mu < 0$ regulates the particle density
only in the thermal cloud. Irrespective of the spatial dimension $s$,
the total particle density can in general 
be made arbitrarily large by adjusting~$\kappa$.
The resulting state on $\fA$ is stationary
for the original dynamics $\alpha_R$; as a matter of fact,
it is a non-equilibrium steady state in the sense of
\mbox{\cite[Cor.\ 3.4]{Ru}}. Moreover, it 
satisfies the KMS-condition 
on the subalgebra $\fA_{h_R}^\perp$ which is generated by the resolvents 
$A(\lambda, f^\perp)$, where
$f^\perp$ lies in the subspace of functions in the
orthogonal complement of $h_R$.

\medskip
Proceeding to the limit of large $R$, the
first term on the right hand side  of equation \eqref{e.5.3}
converges according to Proposition \ref{p.4.1}.
Since the functions $h_R$ are solutions of the equation
$\Delta h_R + \epsilon_R h_R = 0$ in the interior of the 
ball $|\bx| \leq R$, the second term would vanish in this limit 
unless $h_R$ is renormalized by a proper choice 
of $\kappa$. Since $\epsilon_R$ tends to $0$,
the resulting functions $R \mapsto \kappa_R \, h_R$
then tend to a distribution carrying zero energy,
\ie for $f \in \cD(\RR^s)$ one has 
\be
\lim_{R \rightarrow \infty} \kappa_R \, \langle h_R, (e^{itH_R} - 1) f \rangle
= \lim_{R \rightarrow \infty} (e^{it \epsilon_R} -1) \, \kappa_R \langle h_R, f \rangle
= 0 \, .
\ee
It follows that the limit states $\omega_{\beta,\mu,\kappa,\infty}$  
satisfy the KMS-condition for the limit dynamics~$\alpha_\infty$
on $\fA$ if $\mu = 0$,
\ie they are equilibrium states.
The resulting thermal cloud is homogeneous
according to Proposition~\ref{p.4.1}, but this need not be so in
case of the second term, describing
the condensate. This fact will be exemplified in the
subsequent section. Since the limit dynamics
commutes with translations, it means that the symmetry
with respect to translations is
spontaneously broken in such limit states.

\section{Spatial structure of condensates} 
\label{sec6}
\setcounter{equation}{0}

In this section, we study the properties of the
condensates appearing in the thermodynamic
limit. We first consider the case
of $s=1$ dimension. It is of some interest
since in addition to the known proper condensate, 
which emerges in the limit of vanishing chemical potential, 
there appears another condensate. We then
turn to $s=3$ dimensions and exhibit 
spatially inhomogeneous condensates. As we will
see, these unusual examples of condensates
form when the confining trap is overfilled 
with particles. 

\medskip
In $s=1$ dimension the Hamiltonians $H_R$ have discrete, simple eigenvalues
\mbox{\cite[Ch.\ 2]{BeSh}}. In view of the symmetry
properties of the potential, the eigenfunctions are either even 
or odd under spatial reflections. The (renormalized)
restrictions of the even eigenfunctions
$h_R$ of $H_R$ to the interval $|x| \leq R$ are 
$x \mapsto \cos(\sqrt{\epsilon_R} \,  x)$, and the respective 
odd eigenfunctions are 
$x \mapsto \sin(\sqrt{\epsilon_R} \, x)/\sqrt{\epsilon_R}$, 
where $\epsilon_R > 0$ is the eigenvalue.
The energy $\epsilon_{R,0} > 0$ of the ground state,
being even, can be estimated with trial functions in
the interval $|x| < R$, giving $\epsilon_{R,0} \leq \pi/4R^2$.
For test functions $f \in \cD(\RR)$, having compact
support, one then obtains for large $R$
\begin{align}
& R \mapsto \int \! dx \, h_{R,0}(x) f(x) \nonumber \\
& =  \int_{-R}^R  \! dx \, \cos(\sqrt{\epsilon_{R,0}} \, x) f(x)
= \int \! dx \, f(x) + O(1/R^2) \, .
\end{align}
Similarly, the first (renormalized) excited state $h_{R, 1}$ is odd and
its energy can be  estimated by $\epsilon_{R,1} \leq \pi/R^2$,
giving
\begin{align}
& R \mapsto \int \! dx \, h_{R,1   }(x) f(x) \nonumber \\
  & =  \int_{-R}^R  \! dx \, 
  \big( \sin(\sqrt{\epsilon_{R,1}} \, x)/\sqrt{\epsilon_{R,1}} \big) f(x) =
\int \! dx \, x f(x) + O(1/R^2) \, .
\end{align}
Thus one obtains for the thermodynamic limit of the
one-particle density matrices \eqref{e.5.3} 
in these even, respectively odd, cases the mean particle densities 
\be
x \mapsto \omega_{\beta,\mu,\kappa,\infty}(a^*(x)a(x))
= \int \! dp \, (e^{\beta(p^2 - \mu)} - 1)^{-1} +
\begin{cases}
  \kappa^2     & \text{even} \\
  \kappa^2 x^2 & \text{odd} \, .
\end{cases}  
\ee
Whereas in the even case the state is spatially homogeneous,
this symmetry is spontaneously broken in the odd case by the condensate.
In both cases the resulting states on $\fA$ are locally
normal. 

\medskip
The cause of the formation of these
different forms of condensates 
clarifies by computing the number of particles
in the interval $|x| \leq R$ in the approximating states. It is
given by 
\begin{align}
& \int_{-R}^R  dx \,  \omega_{\beta,\mu,\kappa,R}(a^*(x)a(x)) \nonumber \\
& = \int_{-R}^R dx \,  \omega_{\beta,\mu,0,R}(a^*(x)a(x)) +
\begin{cases}
  \kappa^2 \int _{-R}^R  dx \, \cos^2(\sqrt{\epsilon_R} \, x)  & \text{even} \\
  \kappa^2 \int _{-R}^R  dx \, 
  \sin^2(\sqrt{\epsilon_R} \, x) / \epsilon_R  & \text{odd} \, .
\end{cases}
\end{align} 
The first term is the mean particle number in the thermal cloud
within the interval. This number can be adjusted by choosing
$\mu < 0$. The number of particles in the condensate grows like $R$ 
in the even case, \ie the density is constant for large
$R$. In the odd case it grows like $R^3$, so this type of
condensate  appears if the confining trap is
overfilled with particles. The condensate then separates into 
two distant clusters which enclose the thermal cloud.

\medskip
These observations have consequences for the equilibrium states
in the limit of vanishing chemical potential. One obtains for the
resolvents in equation~\eqref{e.2.3} 
\be
\lim_{\mu \nearrow 0} \omega_{\beta, \mu, \kappa, \infty}(A(\lambda,f)) =  0
\quad \text{iff} \quad \int \! dx f(x) \neq 0 \, .
\ee
It shows that the particle density in the limit states is 
infinitely large, which agrees with common knowledge.
The subspace of test functions \mbox{$\cD_0(\RR) \subset \cD(\RR)$}, 
satisfying $\int \! dx  f(x) \!  = 0$, is sensitive
to the thermal cloud. It has co-dimension~$1$,
which shows that a proper condensate described by
a constant function appears in the limit. 
Since this condensate has infinite density, the limit state
does not change if one composes the approximating states
with Weyl automorphisms involving the 
ground state wavefunction $h_{R,0}$, cf.\ 
\cite[Prop.\ 3.3]{BaBu}.~Yet if one
chooses the first excited state $h_{R,1}$, the
resulting one-point functions $\int \! dx \, x f(x)$
do not vanish in general on testfunctions
$f \in \cD_0(\RR)$.
So the limit state changes and describes
a condensate which is visible on the 
subalgebra of observables $\fA_0 \subset \fA$
generated by resolvents \eqref{e.2.3}
with $f \in \cD_0(\RR)$.

\medskip
Turning to $s = 3$ dimensions, the 
eigenfunctions $h_R$ of the Hamiltonian~$H_R$
for eigenvalues $\varepsilon_R > 0$
are given by products of
spherical harmonics and solutions of the radial
Schr\"odinger equation. The restrictions of the radial 
solutions to the region $|\bx| \leq R$ are
Bessel functions, which depend on $\epsilon_R$ and
the angular momentum $l$. Whereas the
renormalized functions for $l=0$
(which include the ground state) lead to constant functions
in the limit of large $R$, the solutions for $l \in \NN$
give rise to non-homogeneous condensates. To illustrate
this fact, we consider the case $l=1$. There the 
renormalized functions in the region $|\bx| \leq R$  are given in 
coordinates $\bx = (x,y,z)$ by 
\be
\bx \mapsto h_{\epsilon_R}(\bx) =
z \, (k_R \, |\bx|)^{-3} \big( (k_R |\bx|) \cos(k_R |\bx|)
- \sin(k_R |\bx|) \big) \, ,
\ee
where $k_R \coloneqq \sqrt{\epsilon_R}$.
The energy of the lowest
eigenvalue can be estimated as in $s=1$ dimension, giving the upper
bound $k_R \leq 3 \pi/(2 R)$.  It follows that
\be
|h_{\epsilon_R}(\bx) - z| \leq c |z| |\bx|^2/R^2 \quad
\text{for} \ |\bx| \leq R \, ,
\ee
where the constant $c$ does not depend on $R$. Hence
for any test function $f \in \cD(\RR^3)$ one arrives at
\be
R \mapsto \int \! d \bx \, h_R(\bx) f(\bx) =
\int \! d \bx \, z   f(\bx) + O(1/R^2) \, ,
\ee
resulting in an inhomogeneous condensate in the thermodynamic
limit. Computing the number of particles in  
the corresponding equilibrium states in  
bounded regions, one finds again that the approximating
states must be overfilled with particles
in order to arrive at such inhomogeneous condensates.
Nevertheless, the corresponding equilibrium states
are all locally normal. 

\medskip 
To summarize, 
in one and two dimensions the particles in the thermal cloud
form a proper homogeneous condensate with infinite
density for vanishing chemical potential. The cloud remains  
visible on a subalgebra of observables. This algebra is still
large (it acts irreducibly on all subspaces of Fockspace
with a finite particle number). Adding an 
inhomogeneous condensate to the equilibrium states, 
it can be recognized by these observables.
In three and more dimensions the states necessarily have to form
condensates if a specific density of the cloud is exceeded.
Depending on the filling of the regions, the resulting condensates
can be homogeneous or inhomogeneous.

\medskip
We conclude this section with a remark on the detection of condensates.
In homogeneous systems, a standard tool is the analysis of 
spatial correlations between the basic
creation and annihilation operators that enter the observables,
\ie the search for ``off-diagonal long range order''. In case of
inhomogeneous systems, such as the preceding ones, this strategy
leads to problems in view of the growth of the condensates at
large distances. Since the states are stationary, it is more
meaningful to look at temporal correlations. This is possible
since the algebra is stable under the dynamics. 

\medskip
For $s \geq 3$, one proceeds from the one-particle density matrices,   
\begin{align}
& t \mapsto \omega_{\beta, 0, \kappa, \infty}(\alpha_\infty(t)(a^*(f)) \, a(g))
\nonumber \\
& = \langle g, (e^{\beta H_\infty} - 1)^{-1} e^{itH_\infty} f \rangle
+ \kappa^2 \langle h_\infty, e^{it H_\infty} f \rangle
\langle g, h_\infty \rangle \, .  
\end{align}
The first term on the right hand side converges to $0$ in the limit
of large $t$ since the spectrum of $H_\infty$ is absolutely continuous. 
The second term is constant in~$t$ since $h_\infty$ is a distribution
of zero energy. Thus  one arrives at
\be
\lim_{t \rightarrow \infty}
\omega_{\beta, 0, \kappa, \infty}(\alpha_\infty(t)(a^*(f)) \, a(g))
= \kappa^2 \langle h_\infty, f \rangle
\langle g, h_\infty \rangle \, .  
\ee
Whereas for $\kappa = 0$ the temporal correlations vanish at
large time distances, this is not the case in presence of condensates. 
The memory of them is not lost. This applies also to the temporal
correlations between arbitrary elements of the resolvent algebra $\fR$. 
In this context, it is important that each observable commutes
at asymptotic times with all observables at any given time
(\ie the dynamics acts in an asymptotically abelian manner). 
Thus, the time translated observables form central sequences;
their limits have the
meaning of generalized charge operators that discriminate disjoint
representations, \eg  between thermal
clouds with and without condensate. 
This will also hold in the presence of interaction.

\section{Conclusions}
\label{sec7}
\setcounter{equation}{0}

The  algebraic approach to many body physics is of advantage when studying
these systems in the limit of infinite numbers of particles and infinite
volume.  There one leaves the comfortable framework of the Stone-von Neumann
theorem and is confronted with a vast number of disjoint
Hilbert space representations
of the algebra of observables. To identify the correct representation
spaces, one resorts to limits of expectation 
functionals over the algebra. The representation spaces can then be
determined from the limits using the GNS-construction. 

\medskip
Our approach, based on the resolvent algebra, is of particular interest
in this context, because this algebra is stable under the automorphic action
of the dynamics for a multitude of interactions. In this respect,
the resolvent algebra is superior
to the Weyl algebra, which neither contains any
gauge-invariant observable, nor
does it allow for generic two-body interactions.
Moreover, the resolvent algebra describes from the outset particles 
in infinite space. One does not rely on box approximations
with impenetrable walls. 

\medskip
To demonstrate the merits of the algebraic approach, we have conducted
in this framework a study of non-interacting particles trapped by soft
confining forces. Although similar models have been studied many times,
our algebraic approach has revealed some new aspects.
First, it was shown that the dynamics of confined systems
converges on the algebra to the free, spatially homogeneous dynamics by
removing the boundary forces in quite arbitrary ways. The 
limit dynamics is unique and does not depend on details of the
confining forces. Second, the flexibility in constructing
representations of the algebra led to the result 
that thermal equilibrium states including condensates
exist in any number of spatial dimensions and
at any temperature. The transition
temperatures in three and more dimensions, where condensation
necessarily takes place in a gas of Bosons,
do not exist in low dimensions. 
Third, we have shown that the condensates appearing in the
thermodynamic limit need not be homogeneous.
Since the limit dynamics commutes with translations, it
means that this symmetry is spontaneously broken in the
limit states. This phenomenon occurs when the
approximating systems are overfilled with particles. 
Instead of generating infinite local particle densities,
the condensate evades and forms separate clusters,
leaving behind an inhomogeneous state. 

\medskip
The algebraic approach has the potential of being useful
also in interacting theories. The dynamics of trapped
systems which are covered
in the framework of the resolvent algebra are
induced by Hamiltonians on Fock space of the form
\begin{align}
H_R & \coloneqq \int \! d\bx \, \big(\bpartial a^*(\bx) \bpartial a(\bx)
+ c_R^2 \, \theta(\bx^2 - R^2)(\bx^2 - R^2) \, a^*(\bx) a(\bx) \big) \nonumber \\
& + \iint \! d\bx d\by \ a^*(\bx) a^*(\by) V(\bx - \by) a(\bx) a(\by) \, .
\end{align}
The operator in the first line coincides with \eqref{e.3.1} 
and the operator in the second line describes a two-body interaction
with an arbitrary potential $V$ which is 
continuous and  vanishes at infinity. We assume here that $V \geq 0$,
but the formalism also covers potentials of positive type which
are repulsive at short distance and attractive
at larger distances. They may be of interest in
investigations of bosonic crystals.

\medskip
In the presence of interaction, 
the observable algebra $\fA$ acting on
spaces with a finite number of particles is likewise invariant under the 
action of the dynamics $t \mapsto \alpha_R(t)$
induced by the Hamiltonians $H_R$;
moreover, this action is pointwise
norm-continuous, cf.\ \cite[Prop.\ 4.4]{Bu2}.
The results for the untruncated harmonic trapping potential
in the appendix of \cite{Bu2} imply that by unfolding the potential
the corresponding sequence of dynamics converges
as in Lemma~\ref{l.3.3}
to the spatially homogeneous interacting dynamics; the radii
are replaced by the decreasing couplings of the harmonic potential.
Thus an analogue of Proposition \ref{p.3.4} obtains in the   
interacting case with the consequences discussed above.
It would be desirable to extend this result to the larger 
family of truncated harmonic potentials considered here. 

\medskip
The canonical and grand canonical Gibbs ensembles for the trapped
systems exist and are represented by density matrices in Fock space.
However, the determination of their thermodynamic limits is still in
its infancy.  Based on the results mentioned above, one knows that
all weak limit points are steady states for the
homogeneous limit dynamics that induce representations
of the algebra of observables on Hilbert spaces. 
However, apart from oscillating lattice systems
with nearest neighbor interactions \cite{Bu1},  it is
not known whether the limit states are still ground, respectively
equilibrium states. 

\medskip
Independently of the solution of this question, the formation of condensates
can be discussed in the present framework.
Based on the idea that specific configurations of single particle
states in condensates
do not have a strong impact on steady states
one may adopt the strategy used in the non-interacting case. 
Taking for example the ground state $\psi_n$ for the
dynamics $\alpha_{R_n}$ in the $n$-particle space,
it is a common approach to focus on the eigenvalues of
the corresponding one-particle density matrix. 
If the largest eigenvalue is proportional to the particle number~$n$,
the corresponding one-particle vector is regarded
as the building block of a condensate \cite{OnPe}.
It is then plausible to represent the pure 
condensate by the $n$-fold tensor product of this 
vector, which in general is no longer an eigenstate of the
Hamiltonian, however. Although this
approach is meaningful from the point of view of
physics, it is plagued by two technicalities.
First, if there is such a one-particle state, there
exists a dense set of one-particle states whose
expectation number is also proportional to~$n$.
So it is difficult to determine that particular state
and arrive at an unambiguous picture of the condensate.
Second, this approach breaks down in the
thermodynamic limit since then the building
blocks of the condensate are to be described
by distributions.

\medskip 
For this reason a complementary approach was proposed in
\cite{BaBu,Bu3}. Given the sequence of ground states $\psi_n$,
$n \in \NN$, one looks at the subspace of test functions
$f \in \cD(\RR^s)$, for which the expectation values of
the corresponding particle
number operators $N(f)$ stay bounded in the state in 
the limit of large $n$. If this holds for all test
functions, it means that there appears no condensate.
More interesting are those cases where the expectation
values stay bounded on a subspace of test functions.
If its co-dimension in $\cD(\RR^s)$ is equal to~$1$ it
indicates a proper condensate \cite{BaBu,Bu3}, but higher co-dimensions
may be possible. In the proper case 
one can recover a distribution describing the
condensate. As in the non-interacting case, it is  
given by a linear functional on the test functions  
that encodes the properties of the condensate.

\medskip 
In order to arrive at condensates with a 
non-trivial spatial structure, our present results suggest
that one has to proceed from states $\varphi_n$ with
energies which are large compared to the 
ground state energy (their quotient should
be bigger than $1$). For increasing
numbers, the bulk of the particles is then expected to evade
more rapidly to
large distances, so one has to renormalize
the states by suitable powers of $n \in \NN$. 
Again, the subspace of test functions for which the
corresponding particle numbers stay finite
in the limit is
expected to have co-dimension $1$ in case of
proper condensates. One then arrives at 
distributions which differ from those obtained 
for ground states.

\medskip 
Having identified these distributions one
can add the corresponding condensates in
arbitrary portions to a
steady state for the limit dynamics.
This is accomplished by composing the state with an
automorphism defined by the adjoint action
of a generalized Weyl operator involving such a 
distribution. In contrast to the free case,
the resulting state will no longer be stationary
in general since 
the automorphisms do not commute with the
dynamics. The resulting state describes 
the situation, where the original state and
a condensate were brought into contact at a given
time. An investigation of its large-time behavior
would then reveal whether the condensate does 
coexist with the steady state. 

\newpage
\noindent
    {\Large \bf Acknowledgment} \\[2mm]
    We thank Robert Seiringer for a helpful discussion. 
    DB would like to thank the Erwin Schr\"odinger Institute for 
    hospitality and financial support during the final phase of this project.

\vspace*{-1mm}

\end{document}